\newcommand{\subscript}[2]{$#1 _ #2$}
\newcommand{\NN}{\mathbb{N}}
\newtheorem{lemma}{Lemma}
\newtheorem{definition}{Definition}
\newtheorem{fact}{Fact}
\newtheorem{proof-sketch}{Proof Sketch}
\begin{document}
\title{Non-signaling deterministic models for non-local correlations have to be uncomputable}

\author{Ariel Bendersky}
\affiliation{Departamento de Computaci\'on, FCEN, Universidad de Buenos Aires, Buenos Aires, Argentina}
\affiliation{CONICET, Argentina}
\author{Gabriel Senno}
\affiliation{Departamento de Computaci\'on, FCEN, Universidad de Buenos Aires, Buenos Aires, Argentina}
\affiliation{CONICET, Argentina}
\author{Gonzalo de la Torre}
\affiliation{ICFO-Institut de Ciencies Fotoniques, The Barcelona
Institute of Science and Technology, 08860 Castelldefels,
Barcelona, Spain}
\author{Santiago Figueira}
\affiliation{Departamento de Computaci\'on, FCEN, Universidad de Buenos Aires, Buenos Aires, Argentina}
\affiliation{CONICET, Argentina}
\author{Antonio Acin}
\affiliation{ICFO-Institut de Ciencies Fotoniques, The Barcelona
Institute of Science and Technology, 08860 Castelldefels,
Barcelona, Spain}
\affiliation{ICREA, Pg. Lluis Companys 23, 08010 Barcelona, Spain}

\begin{abstract}
 Quantum mechanics postulates random outcomes. However, a model making the same output predictions but in a deterministic manner would be, in principle, experimentally indistinguishable from quantum theory. In this work we consider such models in the context of non-locality on a device independent scenario. That is, we study pairs of non-local boxes that produce their outputs deterministically. It is known that, for these boxes to be non-local, at least one of the boxes' output has to depend on the other party's input via some kind of hidden signaling. We prove that, if the deterministic mechanism is also algorithmic, there is a protocol which, with the sole knowledge of any upper bound on the time complexity of such algorithm, extracts that hidden signaling and uses it for the communication of information.
\end{abstract}

\pacs{03.67.-a, 03.65.Ud}

\maketitle


Bell nonlocality~\cite{bell1964einstein} makes us choose between determinism and the non-signaling principle~\cite{valentini}. That is, if one wants to account in a deterministic manner for the non-local correlations that quantum mechanics predicts and which we are now almost certain~\cite{hensen2015loophole,giustina2015significant,PhysRevLett.115.250402} that Nature exhibits, one must allow for the existence of some kind of signaling mechanism that links distant measurement choices and outcomes. But, since quantum correlations are non-signaling, such signaling mechanism must be restricted to the so-called hidden variables, and not reach the phenomenological level.
Known examples of deterministic non-local theories violating the non-signaling principle (also referred to as \textit{parameter independence}~\cite{shimony1986events}) at the hidden-variable level are: the hidden variable model with communication of Toner and Bacon~\cite{toner2003communication} and, more prominently, Bohmian mechanics~\cite{bohm1952suggested}. For those models that use classical communication to mimic non-locality, one can in fact study the amount of communication needed (see, for example,~\cite{regev2009simulating,shi2008tensor,degorre2011communication}).

A reasonable feature that one would expect of any physical model is that it is computable~\cite{feynman1982simulating}. This means that, in principle, one should be able to write a computer program that given a description of an experiment (that is, the measurement choices and the state of the system) outputs the model's outcomes predictions (these being probabilities in the case of quantum mechanics).

Our main result is that, on the contrary, deterministic models of non-local correlations need to be uncomputable if we want to prevent those correlations from being signaling. In other words, we show that if the deterministic model is computable, the hidden signaling mechanism used to exhibit non-locality can be extracted at the observation level and used for the communication of information. More specifically, we give a protocol to perform one-way communication between two observers holding computable non-local boxes. 

There are a few previous results on this direction. First, this result has a flavour similar to \cite{yurtsever2000quantum}. However, we obtain our result in a device-independent scenario, that is, without assuming quantum mechanics, and provide an explicit communication protocol. Second, in \cite{PhysRevA.92.052102, baumeler2016causality} it is shown that some non-local boxes fed with algorithmically random strings can't produce computable outputs. We show that any set of non-local boxes that work by using hidden communication and an algorithm to define their outputs, can be used to signal at the phenomenological level.

This paper is organized as follows: first we introduce the scenario that we are considering. Then we briefly review the tools from computer science that we need to resort to in order to prove our main result. Finally, we present and prove our results.

\begin{figure}[ht]
\adjincludegraphics[width=6cm,trim={0 0 0 {.05\width}},clip]{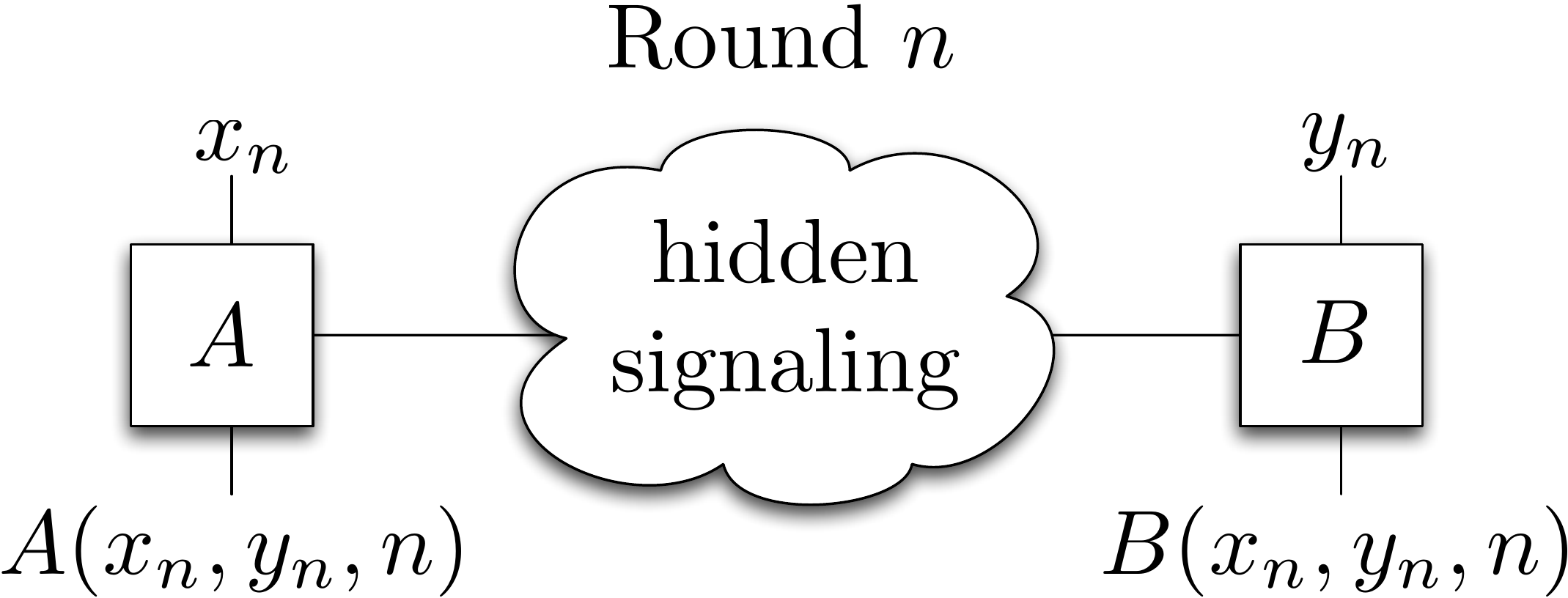}
\caption{Schematic representation of the scenario considered. Two distant observers, Alice and Bob, run a Bell test by implementing measurements on two systems. The observed correlations are described by a hidden-signaling mechanism plus computable functions determining the outputs given the inputs at each round $n$.}
\end{figure}

\textit{The scenario. -- }We consider a standard Bell scenario. For the sake of simplicity, we present our results for the simplest Clauser-Horne-Shimony-Holt (CHSH) Bell test~\cite{chsh} where we have two parties, Alice and Bob, each one with a box that has a binary input and a binary output, but the extension to other scenarios is straightforward.

\begin{definition}\label{def:non-local-pair} A pair of boxes $A,B$ with inputs $(x_i)_{i\in\NN}$ and $(y_i)_{i\in\NN}$ that are independent tosses of a fair coin, and outputs $(a_i)_{i\in\NN}$ and $(b_i)_{i\in\NN}$ is \textit{non-local} iff 
$$
p(a,b|x,y)=\lim_{n\to\infty} \frac{4\cdot \#\{i< n \mid(x_i,y_i,a_i,b_i)=(x,y,a,b)\}}{n}
$$
violates a Bell inequality with probability 1.
\end{definition}


Our goal is to study deterministic and computable models that reproduce non-local correlations. Thus, the boxes under consideration are computable, meaning that there is a computable function $A(x,y,n)$ which gives, for each round $n$, the output of Alice's box when her input is $x$ and Bob's input is $y$, and a similar function $B(x,y,n)$ for Bob's box. Definition \ref{def:non-local-pair}, however, is general enough to cover the usual non-deterministic scenario as well. 

As we said in the introduction, because we are looking at deterministic boxes generating non-local correlations, their outputs have to depend on each other's input. Since the boxes are computable, this is the only information they need to share, as any other necessary data can be computed from the inputs. It is important to note that, although it seems that our toy model is signaling, and therefore it wouldn't come as a surprise that Alice can signal to Bob, this is not the case. The model uses signaling for its internal workings but does not necessarily allow Alice and Bob to send information to each other. For instance, if one does not impose the computable condition to $A$ and $B$, one can easily simulate quantum mechanics in a way that is completely equivalent and indistinguishable from standard quantum theory when having access only to the boxes inputs and outputs and local randomness. The possibility of having non-locality at the observational level depends also on the computable condition.

It is easy to see that, if the dependence between distant inputs and outputs happens in only finitely many rounds, the boxes are essentially local. So, we have that:

\begin{lemma}\label{lemma1} If $A,B$ 
is a non-local pair, 
then
$\exists x\in\{0,1\}. A(x,0,n)\neq A(x,1,n)$ or $\exists y\in\{0,1\}. B(0,y,n)\neq B(1,y,n)$ for infinitely many $n$s.
\end{lemma}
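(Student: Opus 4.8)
The plan is to prove the contrapositive. Suppose the dependence on the distant input occurs in only finitely many rounds, i.e.\ there is an $N$ such that for every $n\ge N$ one has $A(x,0,n)=A(x,1,n)$ for both $x\in\{0,1\}$ and $B(0,y,n)=B(1,y,n)$ for both $y\in\{0,1\}$. The goal is to show that the correlations $p(a,b\mid x,y)$ of Definition~\ref{def:non-local-pair} then admit a local hidden-variable model, contradicting the hypothesis that they violate a Bell inequality. For $n\ge N$ the round-$n$ outputs are generated by a purely \emph{local} deterministic strategy, which I would encode as $\lambda_n:=(A(0,0,n),A(1,0,n),B(0,0,n),B(0,1,n))\in\{0,1\}^4$: reading $\lambda_n=(\alpha_0,\alpha_1,\beta_0,\beta_1)$, Alice outputs $\alpha_x$ on input $x$ and Bob outputs $\beta_y$ on input $y$, regardless of the other party's input. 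There are only $16$ possible values of $\lambda_n$.

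The crux is a law-of-large-numbers computation. The inputs $(x_i)_i$ and $(y_i)_i$ are i.i.d.\ fair coins, while the sets $\{i\ge N:\lambda_i=\lambda\}$ are deterministic; hence by the strong law of large numbers, almost surely and simultaneously for all $\lambda$ and all $x,y$,
\[
\tfrac1n\#\{i<n:\lambda_i=\lambda,\ x_i=x,\ y_i=y\}=\tfrac14 d^{(n)}_\lambda+o(1),\qquad d^{(n)}_\lambda:=\tfrac1n\#\{i<n:\lambda_i=\lambda\},
\]
with the finitely many rounds $i<N$ contributing $O(N/n)\to 0$ and a finite class $\{i\ge N:\lambda_i=\lambda\}$ giving a vanishing left-hand side. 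Since for $i\ge N$ the event $(x_i,y_i,a_i,b_i)=(x,y,a,b)$ holds exactly when $x_i=x$, $y_i=y$ and $\lambda_i$ prescribes output $a$ on $x$ and $b$ on $y$, plugging this into Definition~\ref{def:non-local-pair} and passing to a subsequence along which all $16$ bounded sequences $d^{(n)}_\lambda$ converge, say to $q_\lambda$ with $\sum_\lambda q_\lambda=1$, gives
\[
p(a,b\mid x,y)=\sum_{\lambda=(\alpha_0,\alpha_1,\beta_0,\beta_1)} q_\lambda\,[\alpha_x=a]\,[\beta_y=b].
\]
This is precisely a local deterministic hidden-variable model with hidden variable $\lambda$ distributed as $q$, so $p$ lies in the local polytope and satisfies every Bell inequality — the desired contradiction.

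I expect the only delicate point to be that the densities $d^{(n)}_\lambda$ need not converge individually a priori, which is why I pass to a common convergent subsequence; this is harmless because each $p(a,b\mid x,y)$ is a genuine limit and hence coincides with its value along any subsequence. One should also fix in advance the probability-$1$ event on which both the strong law holds and (per Definition~\ref{def:non-local-pair}) the defining limits exist and violate a Bell inequality, and carry out the argument there. Everything else is routine bookkeeping.
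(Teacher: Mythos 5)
Your proof is correct and follows exactly the route the paper intends: the paper offers no proof beyond the one-line remark that finitely many rounds of cross-dependence make the boxes ``essentially local'', and your contrapositive argument---encoding each round $n\ge N$ as a local deterministic strategy $\lambda_n$, applying the strong law of large numbers to the i.i.d.\ fair-coin inputs on each fiber $\{i:\lambda_i=\lambda\}$, and exhibiting the limiting correlations as a convex combination $\sum_\lambda q_\lambda[\alpha_x=a][\beta_y=b]$ lying in the local polytope---is a complete and careful formalization of precisely that remark. The subsequence device for the possibly non-convergent densities $d^{(n)}_\lambda$ and the intersection of the two probability-one events are exactly the right points to attend to; I see no gap.
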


In the following, $A$ and $B$ will form a non-local pair and, without loss of generality, we make the next assumption:
\begin{enumerate}[label=(*)]
\item\label{item:assumption} At least Bob's function $B$ depends on Alice's input for infinitely many $n$s.
\end{enumerate}

Lemma \ref{lemma1} tells us that, for infinitely many $n$s, the value of $x$ can be determined
from the output of $B$ with the suitable choice of $y$.en el azul de la última 
Therefore, if Alice and Bob knew how to compute $B$, they could trivially signal. The situation we want to study is when $A$ and $B$ are unknown. What we show next is that, with the assumption that $A$ and $B$ are computable functions, one can actually devise a protocol to transmit one-way information from Alice to Bob with the sole knowledge of some upper bound on their time computational complexity. Before showing the protocol, we need to introduce some concepts from computability theory.


Our protocol is based on two main concepts, namely learnability schemes of computable functions and a notion of randomness against adversaries with bounded computational power. We explain these two concepts in what follows.

\textit{Learnability of computable functions. --} The main ingredient in the protocol we are about to describe is that of \textit{learning} computable functions from a finite number of samples~\cite{zeugmann2008learning}. That is, we will provide Bob with a Turing machine $L_t$ such that, on input (some coding of) $(x_1,y_1,B(x_1,y_1,1))\dots(x_n,y_n,B(x_n,y_n,n))$, for a sufficiently large $n$, it outputs (the index of) some Turing machine that computes $B$. From that $n$ onwards, Bob will have a way to guess Alice's input (see Lemma \ref{lemma1}). We call $L_t$ a \textit{learner} for the class of functions computable in time $O(t)$. It is important to notice that such $n$ after which the function is learned is not computable (i.e., it is impossible to know when the function has already been learned).

The behaviour of $L_t$ consist of enumerating all machines that run in time $O(t)$ until finding the first machine which reproduces the input-output behaviour of the target function seen so far. Once such a machine is found, its index is returned. Since, by assumption, the target function is computed in time $O(t)$, after finitely many mistakes, the learner will output the index of one such machine. See Fig. \ref{figLearn} for schematic description. This algorithm is a special case of a more general technique called \textit{learning by enumeration}~\cite{gold1967language}. As its name suggests, this technique works for every class of machines which can be computably enumerated, the class of machines which run in $O(t)$ being a particular example.

\begin{figure}[ht]
\includegraphics[scale=.25]{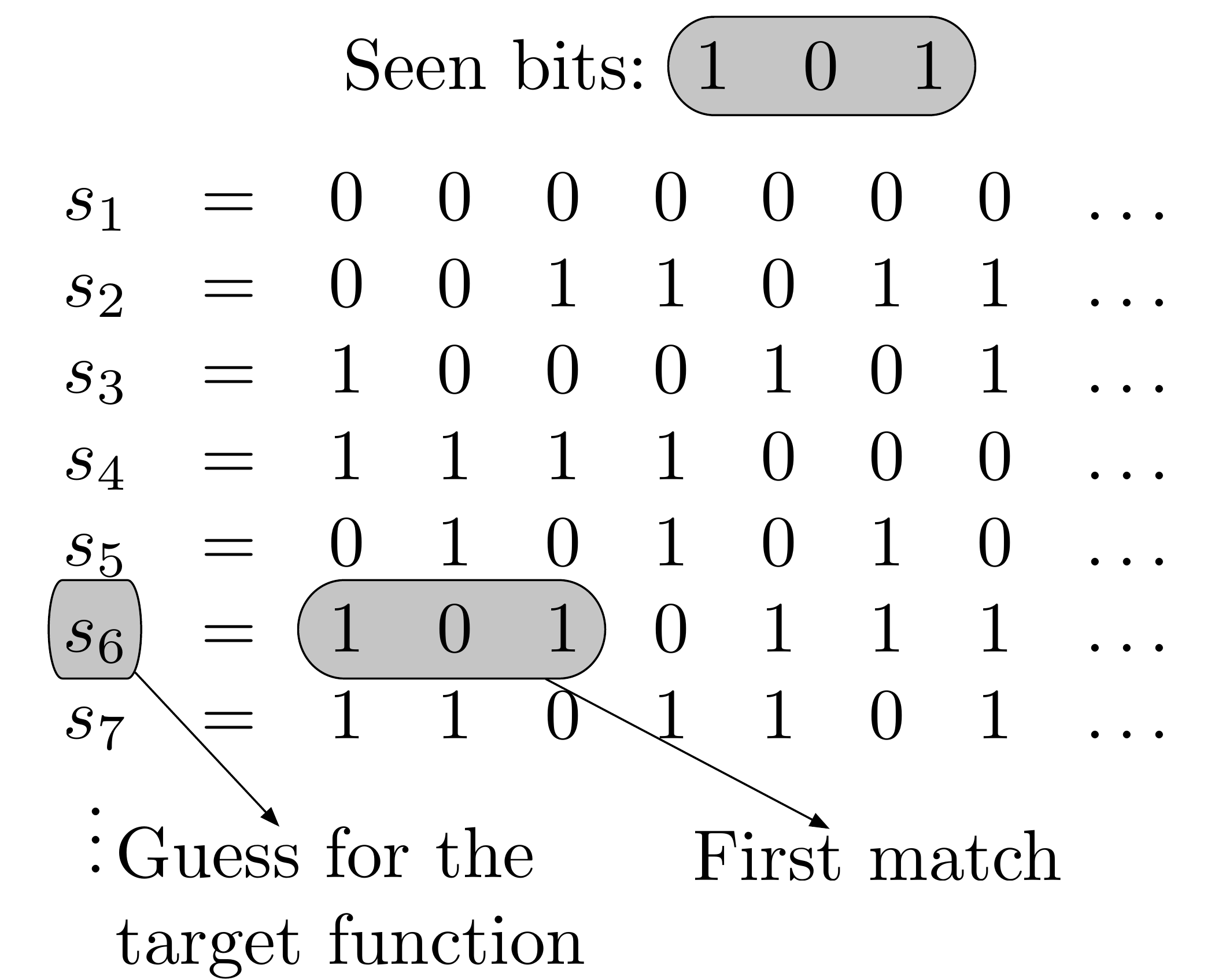}
\caption{Suppose $\{s_i\}_{i\in\NN}$ is a (computable) enumeration of Turing machines which run in time $O(t)$. The $i$-th row represents the sequence $s_i(0),s_i(1),s_i(2),\dots$.
After seeing $f(0)=1$, $f(1)=0$ and $f(2)=1$, the guess for the target function is defined as (the index of) the first machine whose outputs match those values (in the example, the guess is $s_6$). If the guess is correct, it is possible to predict the whole tail of the sequence in advance. \label{figLearn}}
\end{figure}

\textit{$t$-randomness. --} It is a basic result from computability theory that computers, or any device equivalent to a Turing machine, cannot generate random sequences. However, one can consider a notion of {\em t-randomness}~\cite{downey2010algorithmic,nies2009computability}, in which the degree of randomness is defined with respect to an adversary whose computing time is bounded by some computable function $t$. Intuitively, the notion of a sequence random with respect to a time bound $t$ will be related to the impossibility of the adversary to \textit{predict} its symbols using a machine running in time $O(t)$. More precisely, let $\Sigma=\{b_1,\dots,b_{k}\}$ be a finite alphabet and $\Sigma^\omega$ the set of infinite sequences of symbols in $\Sigma$. A sequence $X\in\Sigma^\omega$ is $t$-random if and only if there is no betting strategy, computable in deterministic time $O(t)$, to win unbounded money betting on its symbols, one by one, using the information of the symbols already seen. 

The betting game is as follows: the gambler starts with some positive initial capital $M(\emptyset)>0$. After seeing the first $n$ symbols of $X$, which we note $X\upharpoonright n$, he has some capital $M(X\upharpoonright n)$, and bets some fraction $d_i$ of it on the next symbol being $b_i$ (for $i=1\dots k$). After that, the next symbol, namely $X(n)$, is revealed, and his new capital becomes
$$
M(X\upharpoonright n+1)=M(X\upharpoonright n)\left(1+k\cdot d_{X(n)} -\sum_{i}d_i\right),
$$
that is, he looses the money for the bets to symbols different from $X(n)$ and wins $k$ times the amount bet on $X(n)$. The gambler wins the game if his capital grows unboundedly as $n$ goes to infinity.

The following two facts about $t$-random sequences will be useful for what comes next:

\begin{fact}\label{fact-random} Let $X\in\Sigma^\omega$, let $\Gamma$ be a non-trivial subset of $\Sigma$, and let $g:\NN\to\{0,1\}$ be computable in time $O(t)$ with $t=\Omega(n^2)$ such that:
 \begin{itemize}
 \item for almost all $n$, if $g(n)=1$ then $X(n)\not\in\Gamma$, and
 \item for infinitely many $n$, $g(n)=1$.
 \end{itemize}
 Then $X$ is not $t$-random.
\end{fact}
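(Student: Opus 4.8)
The plan is to contradict $t$-randomness of $X$ directly from the definition by exhibiting a betting strategy, computable in deterministic time $O(t)$, whose capital along $X$ is unbounded. Write $k=|\Sigma|$ and $j=|\Gamma|$. Since $\Gamma$ is non-trivial we have $1\le j\le k-1$, so both $\Gamma$ and its complement are non-empty; fix once and for all a constant $c\in(0,1)$, say $c=1/2$. The strategy is the one suggested by $g$: having seen $X\uph n$, compute $g(n)$; if $g(n)=0$, place no bet at all (every $d_i=0$, capital unchanged); if $g(n)=1$, stake a $c$-fraction of the capital uniformly over the symbols outside $\Gamma$, i.e., set $d_i=c/(k-j)$ for each $b_i\notin\Gamma$ and $d_i=0$ for each $b_i\in\Gamma$, so that $\sum_i d_i=c<1$ and the bet is admissible.

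Next I would track the capital round by round. A round with $g(n)=0$ multiplies it by $1$; a round with $g(n)=1$ and $X(n)\notin\Gamma$ multiplies it by $1+k\cdot c/(k-j)-c = 1 + c\,j/(k-j) =: 1+\delta$, a constant strictly larger than $1$; and a round with $g(n)=1$ and $X(n)\in\Gamma$ multiplies it by $1-c\in(0,1)$. By the first hypothesis on $g$ the last kind of round occurs only finitely often, say $r$ times in total; by the second hypothesis, together with the first, the middle kind of round occurs infinitely often. Hence, letting $s(n)$ be the number of these ``good'' rounds occurring before stage $n$,
$$
M(X\uph n)\ \ge\ M(\emptyset)\,(1-c)^{r}\,(1+\delta)^{s(n)},
$$
and since $r$ is fixed while $s(n)\to\infty$, the right-hand side diverges; thus the strategy wins on $X$.

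Then I would check that the strategy is computable in deterministic time $O(t)$. At stage $n$ it makes a single call to $g$ (cost $O(t(n))$ by hypothesis) and outputs one of two fixed rational vectors determined solely by $\Gamma$, while maintaining the capital is elementary rational arithmetic on numbers of $O(n)$ bits; the assumption $t=\Omega(n^2)$ is exactly what guarantees this bookkeeping overhead stays within $O(t)$. Therefore the strategy witnesses that $X$ is not $t$-random.

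The one point that needs care — and the reason I would \emph{not} simply bet the whole capital on $\Sigma\setminus\Gamma$ whenever $g(n)=1$ — is that the hypotheses say the bad rounds are \emph{finitely many} but give no computable bound on when they stop: an all-or-nothing bet would be wiped out by a single bad round, and the strategy cannot detect when it is safe to commit fully. Betting a fixed fraction $c<1$ makes each bad round cost only a constant multiplicative factor, so that finitely many of them cannot overcome the geometric growth contributed by the infinitely many good rounds. Everything else is routine, and the hypothesis $t=\Omega(n^2)$ plays only the minor role of leaving room to carry the exponentially large capital.
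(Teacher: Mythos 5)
Your proof is correct, and it follows the same basic idea as the paper's (use $g$ as a signal for when to bet against $\Gamma$), but it resolves the finitely-many-exceptions issue differently. The paper fixes the threshold $m$ beyond which $g(n)=1$ implies $X(n)\notin\Gamma$, hardwires it into the strategy, bets evenly (i.e.\ effectively abstains) for $n\le m$ or $g(n)=0$, and then stakes the \emph{entire} capital uniformly on $\Sigma\setminus\Gamma$ whenever $n>m$ and $g(n)=1$; this is legitimate because $t$-randomness quantifies over all strategies computable in time $O(t)$, and a strategy with the finite constant $m$ built in as non-uniform advice is still such a strategy, so your worry that one ``cannot detect when it is safe to commit fully'' does not actually defeat the all-or-nothing bet --- it only means the winning strategy depends non-uniformly on $X$ and $g$, which the definition permits. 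Your alternative --- staking only a fixed fraction $c<1$ so that each of the $r$ bad rounds costs a factor $1-c$ while the infinitely many good rounds each contribute $1+cj/(k-j)$ --- buys uniformity: a single explicitly described strategy works without knowing $m$, at the price of a slightly more careful capital estimate $M(X\uph n)\ge M(\emptyset)(1-c)^r(1+\delta)^{s(n)}$. Both arguments are standard and sound; the arithmetic in yours checks out, and your remarks on the $O(t)$ time bound and the role of $t=\Omega(n^2)$ match the paper's (terser) justification.
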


\begin{proof}[Proof sketch.]

The idea is that, if for infinitely many $n$ you know that the next symbol will be different from the ones in $\Gamma$, waiting for those positions to bet to the other symbols is a winning strategy. More formally, let $m$ be such that for all $n>m$ if $g(n)=1$ then $X(n)\not\in\Gamma$.  It is easy to see that betting evenly on all symbols when $g(n)=0$ or $n\leq m$, and, for all $n>m$, betting $M(X\upharpoonright n)/\#(\Sigma \setminus \Gamma)$ on each symbol not in $\Gamma$ when $g(n)=1$ makes our capital grow unboundedly. This strategy is computable in deterministic time $O(t)$, because $g$ is. Then, $X$ is not $t$-random.
\end{proof}

\begin{fact}[See e.g.~\cite{figueira2015feasible}]
Given a program for the time function $t$, one can compute a $t$-random sequence in deterministic time $O(t(n)\cdot log(t(n))\cdot n^3)$.
\end{fact}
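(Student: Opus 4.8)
The plan, following~\cite{figueira2015feasible}, is to build the sequence by diagonalizing in an efficient way against \emph{all} betting strategies that run in deterministic time $O(t)$ simultaneously. First I would fix, using the given program for $t$ (which we may assume non-decreasing and time-constructible), a computable enumeration $(d_e)_{e\ge 1}$ of such strategies: for $e=\langle j,c\rangle$, let $d_e$ be the strategy that on a prefix $\sigma$ simulates the $j$-th Turing machine on $\sigma$ for $c\cdot t(|\sigma|)$ steps and uses its output as the vector of bet fractions when that output is well-formed (non-negative entries summing to at most $1$), and bets evenly otherwise; then each $d_e$ runs in time $O(t)$ and every time-$O(t)$ strategy equals some $d_e$. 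Normalizing so that $d_e(\emptyset)=1$, consider the combined strategy $N=\sum_{e\ge 1}2^{-e}d_e$. Because the capital update of the betting game is fair, each $d_e$ is a martingale and hence so is $N$, with $N(\emptyset)=1$ and the averaging identity $\frac{1}{k}\sum_{b\in\Sigma}N(\sigma\concat b)=N(\sigma)$; moreover, since $N\ge 2^{-e}d_e$, if some $d_e$ wins unbounded money on a sequence $X$ then so does $N$.

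Next I would construct $X$ symbol by symbol so as to keep $N$ essentially non-increasing along it. The averaging identity guarantees that for every $\sigma$ there is a symbol $b$ with $N(\sigma\concat b)\le N(\sigma)$; crucially, we do not need $N$ exactly, only up to a multiplicative factor $1+2^{-n}$ at stage $n$, and that much \emph{is} computable, since on a prefix of length $n$ the tail $\sum_{e>E}2^{-e}d_e(\sigma)$ is at most $2^{-E}k^{n}$, so it suffices to take $E=E(n)=O(n)$ and evaluate the finite partial sum $\sum_{e\le E(n)}2^{-e}d_e$, truncated to $O(n)$ bits of precision. Picking at each stage a symbol that does not increase this computable approximation by more than the factor $1+2^{-n}$ yields $N(X\uph n)\le N(\emptyset)\prod_{j\ge 1}(1+2^{-j})<\infty$ for all $n$; hence $2^{-e}d_e\le N$ stays bounded along $X$ for every $e$, so no time-$O(t)$ strategy wins on $X$, i.e., $X$ is $t$-random.

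Finally I would bound the running time. Producing $X\uph n$ amounts to running stages $1,\dots,n$ while maintaining incrementally the capitals of the $O(n)$ strategies currently active: at stage $m$ only $O(1)$ strategies enter, and advancing each active strategy by one symbol costs one simulation of length $O(t(m))$ plus a constant number of arithmetic operations on $O(m)$-bit truncated capitals; summing over stages, and using that $t$ dominates the resulting polynomial overhead in the regime considered, gives the claimed bound $O\!\left(t(n)\log t(n)\cdot n^{3}\right)$. I expect this last efficiency accounting to be the delicate part: $N$ itself is not computable and individual capitals grow exponentially in the prefix length, so one must combine index truncation, precision truncation, and incremental updates while checking that the accumulated truncation errors stay within the $\sum_n 2^{-n}$ budget and that the bookkeeping respects the stated time bound; the details are in~\cite{figueira2015feasible}.
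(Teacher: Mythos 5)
The paper states this fact without proof, deferring entirely to~\cite{figueira2015feasible}, and your proposal reconstructs the standard argument used there: diagonalize against a computably weighted sum of all clocked time-$O(t)$ betting strategies, exploiting the fairness (martingale) identity to always find a non-increasing symbol, and using index truncation plus precision truncation to make the sum computable within the stated time. The sketch is correct and you rightly identify the efficiency bookkeeping (catching up newly activated strategies and arithmetic on $O(n)$-bit truncated capitals) as the only delicate step, which the $n^3$ factor in the stated bound is there to absorb.
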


\textit{The signaling protocol. --} We are now in position to present our main result: the construction of a protocol that would allow two parties sharing non-local correlations to signal if the mechanism reproducing these correlations was computable and time bounded by a function $t$ known to Alice and Bob. 

The key idea of the protocol will be for Bob to perform a learnability scheme on the outputs of his box so as to, after finitely many rounds, be able to guess the future outputs and use them to tell Alice's input (see Lemma \ref{lemma1}). There are two issues that we will need to deal with in this approach:

\begin{itemize}
\item in order for Bob to learn a program to compute the function $B$, he needs to know Alice's inputs $x$, but the whole idea of this protocol was that these were conveyed from her to him trough the interaction with the boxes.
\item Bob will not be able to tell when he has effectively learned $B$.
\end{itemize}

The trick that will allow us to cope with these two issues is, to our knowledge, a new connection between $t$-randomness and learnability theory which consist on randomnly alternating between two kinds of rounds: learning rounds and signaling rounds. The former are rounds in which both parties run the Bell test using predetermined inputs known to both Alice and Bob so that Bob can learn the deterministic function $B$ determining his observed outputs. In principle, running a Bell test with predetermined inputs is not possible, but the idea is to use a predetermined sequence that is $t$-random, that is effectively random for a process computable in time O(t), such as the functions $A$ and $B$ determining the outputs in each device. The signaling rounds are used by Alice to signal the message to Bob, assuming $B$ is already known. In what follows we provide a more detailed description of the protocol and a proof of its soundness. 


First of all, Alice and Bob choose a computable function $t$ and assume $B$ is computable in deterministic time $O(t)$ (the protocol will fail if this assumption is false). As we said before, Bob will be using a learner $L_t$ for the class of functions computable in deterministic time $O(t)$. On the learning rounds, Alice and Bob will input their boxes with a prearranged input pair and Bob will use the output of his box to, through $L_t$, update his guess for a program that computes $B$. On the signaling rounds, Alice will input her message and Bob, acting according to his current  guess for $B$, will choose, whenever possible, the input $y$ that allows him to tell Alice's input $x$.

The protocol has thus three parameters: a computable time function $t$, a sequence
$$
S\in \{(0,0),(0,1),(1,0),(1,1),1,\dots,m\}^\omega
$$
(which is the one shared by Alice and Bob to perform the switching between the two kinds of rounds), plus a number $m$ which represents the size of the message that Alice wants to send to Bob.

All in all, here is the signaling protocol $\mathcal{P}(t,S,m)$:

On each round $n$:
\begin{enumerate}
	\item \textbf{Learning round: }if $S(n)=(x,y)$, Alice inputs $x$ and Bob inputs $y$. Furthermore, Bob sets his current guess $\widetilde B$ of a Turing machine that computes $B$ to $L_t((x_{i_1},y_{i_1},B(x_{i_1},y_{i_1},i_1))\dots(x,y,B(x,y,n)))$, with $i_k$ being the past learning rounds.
	\item \textbf{Signaling round}: if $S(n)=i\in\{1,\dots,m\}$, Alice inputs the $i$th bit of her message and Bob uses his current guess $\widetilde B$ of a program that computes $B$ to see if there is a $y$ such that $\widetilde B(0,y,n)\neq \widetilde B(1,y,n)$. If there is such $y$, he inputs it. If not, he inputs $0$.
\end{enumerate}

For this protocol to be sound, it suffices that the following properties hold:

\begin{enumerate}[label=(\subscript{P}{\arabic*})]
\item\label{item:prop1} There exists a number of round $n$ such that for all $m\geq n$, and $x,y\in\{0,1\}$, we have $\widetilde B(x,y,m)=B(x,y,m)$, i.e.\ the learning process  converges to $B$.
\item\label{item:prop2} For the $i$-th bit of Alice's message and for infinitely many $n$, $S(n)=i\in\NN$ and $\exists y\in\{0,1\}. B(0,y,n)\neq B(1,y,n)$, i.e.\ the signaling process works for infinitely many rounds.
\end{enumerate}

It is clear that these two properties give us signaling correlations because by \ref{item:prop1} after finitely many rounds the program that Bob uses in the signaling rounds correctly computes $B$ and, by \ref{item:prop2}, the number of rounds in which he will be able to use such program to tell every bit of Alice's message is infinite.

At this point, we can further clarify the assumption on the computational complexity of $B$. Our protocol is based on the existence of a learner for the class of computable functions to which we assume $B$ belongs. One would like to use a learner for the class of \textit{all} computable functions but it is a fundamental result in computability theory~\cite{zeugmann2008learning} that this class is not learnable. Of course we could have restricted the class of functions in some other way. For instance, instead of having a bound in the computational time, one could have bound the computational space.

Now, whether \ref{item:prop1} and \ref{item:prop2} hold or not will depend on the choice of shared switching sequence $S$. For example, if the $S(n)$ are independent and uniformly distributed random variables, it is easy to see that \ref{item:prop1} and \ref{item:prop2} hold with probability $1$. But this would make the argument too weak, as it would mean that Alice and Bob have access to a non-computable (random) sequence to test models of nature that are assumed to use only computable functions. On the other hand, it is not hard to see that if $S$ is chosen such that, for example, it indicates learning in the odd rounds and signaling in the even, the learning could converge to a program that coincides with $B$ in almost all odd positions but, for the even positions, it outputs, say, the negation of $B$ (this program, of course, also runs in time $O(t)$). One can then expect that some notion of computable randomness is needed for the protocol to work. The question is: can we find a computable sequence $S$ that does the job?

In general, no easily predictable sequence $S$ is suitable. The idea will be to define $S$ computationally hard enough to predict (this will be related to the complexity that the protocol assumes on~$B$). The existence of such $S$ will come from the theory of \textit{computable randomness}. Lemmas \ref{lemma-learning} and \ref{lemma-signaling} below say that, when 
$$
S\in \{(0,0),(0,1),(1,0),(1,1),1,\dots,m\}^\omega
$$
is $t$-random, the protocol $\mathcal{P}(t,S,m)$ is sound.

\begin{lemma}\label{lemma-learning}
If $S$ is $t$-random then $\mathcal{P}(t,S,m)$ verifies $(P_1)$.
\end{lemma}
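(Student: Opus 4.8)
\textbf{Proof proposal for Lemma~\ref{lemma-learning}.} The plan is to argue by contraposition: assuming $\mathcal{P}(t,S,m)$ fails $(P_1)$, I will exhibit an $O(t)$-computable betting strategy that wins on $S$, contradicting its $t$-randomness. The crucial observation is that the learner $L_t$ is fully deterministic: given the transcript of past learning rounds, its current guess $\widetilde B$ is completely determined. Since the learning rounds are exactly those positions $n$ with $S(n)\in\{(0,0),(0,1),(1,0),(1,1)\}$, and on such rounds the inputs $(x_n,y_n)$ are read off $S(n)$ while the box output $B(x_n,y_n,n)$ is itself computable in time $O(t)$, a gambler watching $S$ can internally simulate the entire learning process and thus at every stage knows Bob's current guess $\widetilde B$.

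First I would observe that learning by enumeration has the following stability property: once $L_t$ outputs a correct index for $B$, it never changes its guess again (the correct machine keeps matching all data, and it is the first such machine to do so from that point on, since no earlier machine in the enumeration can have survived). So $(P_1)$ fails precisely when $L_t$ never converges to a correct program, which — because the class of $O(t)$-machines is enumerated and $B$ lies in it — can only happen if infinitely often the current guess $\widetilde B$ is eventually refuted. Concretely, if $\widetilde B$ is the guess held after the $k$-th learning round, its incorrectness is witnessed at some later learning round $n$ where $\widetilde B(x_n,y_n,n)\neq B(x_n,y_n,n)$; equivalently, the gambler, having computed both $\widetilde B(x_n,y_n,n)$ and $B(x_n,y_n,n)$, knows in advance that $S(n)$ cannot be the particular pair $(x,y)$ that would make the transcript consistent with $\widetilde B$ — so at least one of the four input-pair symbols is excluded at position $n$.

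Next I would package this into Fact~\ref{fact-random}. Take the alphabet $\Sigma=\{(0,0),(0,1),(1,0),(1,1),1,\dots,m\}$, let $\Gamma$ be a suitable non-trivial subset, and define $g(n)=1$ exactly at those learning rounds where the gambler's simulation of $L_t$ detects that the current guess will be contradicted by the pair that $S(n)$ must \emph{not} be — more carefully, one restricts attention to learning rounds and, using that a wrong guess is refuted infinitely often when $(P_1)$ fails, arranges that $g(n)=1$ infinitely often while $g(n)=1 \Rightarrow S(n)\notin\Gamma$ for all large $n$. The function $g$ is computable in time $O(t)$ (up to the polynomial overhead of simulating $L_t$ and evaluating $B$, which is absorbed into the $\Omega(n^2)$ slack, or one simply takes $t$ large enough / re-reads ``time $O(t)$'' with the usual polynomial cushion as in Fact~2). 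Applying Fact~\ref{fact-random} then yields that $S$ is not $t$-random, completing the contraposition.

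The main obstacle I anticipate is bookkeeping rather than conceptual: making the exclusion set $\Gamma$ and the predictor $g$ genuinely work requires that a \emph{single} fixed $\Gamma$ be excluded infinitely often, whereas the naive argument only tells us that \emph{some} symbol is excluded at each bad round, possibly a different one each time. One fix is to note there are only finitely many ($4$) candidate input-pair symbols, so by pigeonhole some fixed pair $(x^*,y^*)$ is the excluded one infinitely often, and set $\Gamma=\{(x^*,y^*)\}$; then $g$ flags precisely those rounds, and $g$ remains $O(t)$-computable because it only needs to check, at each learning round, whether the current guess $\widetilde B$ disagrees with $B$ at input $(x^*,y^*)$ and the current round index. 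Getting the quantifier order right here (the bad pair depends on $S$, but $\Gamma$ is allowed to depend on $S$ since we are proving non-$t$-randomness of that particular $S$) and verifying the ``almost all $n$'' clause of Fact~\ref{fact-random} survives the finitely many early learning rounds is the delicate part, but it is routine once the pigeonhole step is in place.
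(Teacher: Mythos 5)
The overall shape of your argument --- contraposition, an $O(t)$-computable predictor $g$, and Fact~\ref{fact-random} applied with $\Gamma$ a set of learning-round symbols --- is the same as the paper's. But there is a genuine error in your identification of how $(P_1)$ can fail: you claim it ``can only happen if infinitely often the current guess $\widetilde B$ is eventually refuted.'' This is false. In learning by enumeration, $B$ itself sits at some fixed index $N$ of the enumeration and is never refuted, so the learner's guess index is non-decreasing and bounded by $N$; hence there are only \emph{finitely} many refutations and the learner always converges to some $O(t)$-computable $f$ that is consistent with every learning-round triple it ever sees. The way $(P_1)$ actually fails --- and this is precisely the scenario the paper describes just before the lemma with the odd/even example --- is that this limit $f$ matches $B$ on all queried data yet differs from $B$ at infinitely many $n$ on inputs (or rounds) that the learning process never queries. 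A betting strategy built around detecting refutations fires only finitely often in that scenario and yields no contradiction with $t$-randomness.

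The paper's proof sidesteps this by working with the limit function $f$ directly: it sets $g(n)=1$ iff $\exists x,y.\,f(x,y,n)\neq B(x,y,n)$, which is a function of $n$ alone (as the statement of Fact~\ref{fact-random} requires, unlike your $g$, which depends on the history $S\uph n$ through the simulated learner) and is computable in time $O(t)$ without the unaccounted cost of re-running $L_t$ at every stage. Your closing pigeonhole observation --- fix one pair $(x^*,y^*)$ on which the disagreement occurs infinitely often and take $\Gamma=\{(x^*,y^*)\}$ --- is the right move, and in fact tightens the paper's own choice of $\Gamma=\{0,1\}^2$ (with the full $\Gamma$, $g(n)=1$ does not by itself guarantee $S(n)\notin\Gamma$, since $f$ might disagree with $B$ at round $n$ only on a pair different from $S(n)$). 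If you replace ``current guess'' by ``limit guess'' throughout, and note that the limit agrees with $B$ on almost all learning rounds at the prearranged input, your fix becomes a correct and slightly sharper version of the paper's argument; as written, however, the reason you give for $g$ firing infinitely often rests on the false refutation claim, so the proof does not go through.
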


\begin{proof} The convergence of the learning process is guaranteed by the assumption that $B$ is computable in time $O(t)$. Let $f$ be the function computable in time $O(t)$ to which $\mathcal{P}(t,S,m)$ converges. This means that for almost all $n$ and all $x,y\in\{0,1\}$, if $S(n)=(x,y)$ then $f(x,y,n)=B(x,y,n)$, i.e.\ at least in the learning rounds, $f$ coincides with $B$ from some point on. Assume by contradiction that for infinitely many $n$
\begin{equation}\label{eq:lem2}
\exists x,y\in\{0,1\}.f(x,y,n)\neq B(x,y,n).
\end{equation}
Now, letting $g:\NN\to\{0,1\}$ be defined as $g(n)=1$ iff \eqref{eq:lem2} is true, and $\Gamma$ as $\{0,1\}^2$, we have by Fact \ref{fact-random} that $S$ is not $t$-random, a contradiction.
\end{proof}

\begin{lemma}\label{lemma-signaling}
If $S$ is $t$-random then $\mathcal{P}(t,S,m)$ verifies $(P_2)$.
\end{lemma}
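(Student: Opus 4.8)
\textbf{Proof proposal for Lemma~\ref{lemma-signaling}.}
The plan is to argue by contradiction, exactly as in the proof of Lemma~\ref{lemma-learning}, using Fact~\ref{fact-random} to show that if the signaling process fails then $S$ cannot be $t$-random. Suppose some bit, say the $i$-th bit of Alice's message, is not conveyed infinitely often; that is, suppose there are only finitely many rounds $n$ with $S(n)=i$ and $\exists y.\,B(0,y,n)\neq B(1,y,n)$. Since, by assumption~\ref{item:assumption}, $B$ depends on Alice's input for infinitely many $n$, the set $D=\{n : \exists y.\,B(0,y,n)\neq B(1,y,n)\}$ is infinite. The negation of \ref{item:prop2} then says that for almost all $n\in D$ we have $S(n)\neq i$.

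First I would define the predicate $g:\NN\to\{0,1\}$ by $g(n)=1$ iff $n\in D$, i.e.\ iff $\exists y\in\{0,1\}.\,B(0,y,n)\neq B(1,y,n)$. This $g$ is computable in time $O(t)$: it just evaluates $B(0,0,n),B(0,1,n),B(1,0,n),B(1,1,n)$, each of which costs $O(t(n))$, so $g$ runs in time $O(t)$ (and we may assume $t=\Omega(n^2)$ as required by Fact~\ref{fact-random}, since a slower time bound only weakens the hypothesis on the model). By the two observations above, $g(n)=1$ for infinitely many $n$ (because $D$ is infinite), and for almost all $n$, if $g(n)=1$ then $S(n)\neq i$, i.e.\ $S(n)\notin\Gamma$ where $\Gamma=\{i\}$. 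Note $\Gamma$ is a non-trivial (nonempty, proper) subset of the symbol alphabet $\{(0,0),(0,1),(1,0),(1,1),1,\dots,m\}$. Applying Fact~\ref{fact-random} with this $X=S$, this $g$, and this $\Gamma$ yields that $S$ is not $t$-random, contradicting the hypothesis. Hence \ref{item:prop2} holds for every bit $i$.

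The one subtlety worth spelling out is why it suffices to rule out the failure for a single fixed bit $i$: property \ref{item:prop2} is a conjunction over the finitely many bits $i\in\{1,\dots,m\}$, and the argument above refutes the negation for each $i$ separately, so a single application of Fact~\ref{fact-random} per bit (or one application with $\Gamma$ chosen as the offending singleton) closes the case. I do not expect a real obstacle here; the main point to get right is the bookkeeping that $g$ depends only on $B$ (not on $\widetilde B$ or on $S$), so it is genuinely computable in time $O(t)$, and that $\Gamma=\{i\}$ is non-trivial, which requires $m\ge 2$ symbols of the switching alphabet beyond $\{i\}$ — which always holds since the four learning symbols are present. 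With those two checks in place, Fact~\ref{fact-random} applies verbatim.
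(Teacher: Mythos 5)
Your proposal is correct and follows essentially the same route as the paper: the same contradiction argument, the same predicate $g(n)=1$ iff $\exists y.\,B(0,y,n)\neq B(1,y,n)$ (infinite by assumption~\ref{item:assumption}, computable in time $O(t)$ since it only evaluates $B$), and an appeal to Fact~\ref{fact-random}. The one difference is the choice of $\Gamma$: the paper applies Fact~\ref{fact-random} once with $\Gamma=\{1,\dots,m\}$, whereas you apply it once per bit with $\Gamma=\{i\}$. Your version is, if anything, slightly more careful, since \ref{item:prop2} is a statement about each bit $i$ separately and the per-bit application refutes its negation for each fixed $i$ directly, while the paper's single application only directly yields that signaling rounds collectively coincide with input-dependence infinitely often; both choices of $\Gamma$ are non-trivial subsets of the switching alphabet, so Fact~\ref{fact-random} applies either way.
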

\begin{proof}
  By assumption \ref{item:assumption} we have that for infinitely many $n$
    \begin{equation}\label{eq:lem3}
    \exists y\in\{0,1\}.B(0,y,n)\neq B(1,y,n).
    \end{equation}
Let $g:\NN\to\{0,1\}$ be defined as $g(n)=1$ iff \eqref{eq:lem3} is true, and $\Gamma$ as $\{1,\dots,m\}$. Assume by contradiction that for almost all $n$ we have that if $S(n)\in\Gamma$ then $g(n)=0$.
Now we have by Fact \ref{fact-random} that $S$ is not $t$-random, a contradiction.
\end{proof}

It is important to note that, without any knowledge of $B$, there is no a priori bound on the time it will take Bob to determine Alice's message with high enough confidence. 
Nonetheless, since this time is finite, there exists some finite distance for which the signaling allowed by our protocol is supraluminal. For instance, if it takes Bob $M$ rounds to find out Alice's message and each round takes a time $T$, then if they are at a distance $cTM$, the message is obtained before a light signal from Alice could reach Bob.

It could be argued that imposing a bound on the time complexity of Alice and Bob's boxes (which are nothing but an abstraction of what Nature is doing to choose the outputs) is a strong requirement. However, since the number of computational steps per second that can be performed by a system of mass $m$ is upper bounded by $2mc^2/\pi\hbar$ \cite{lloyd2000ultimate}, this is not only a requirement of our protocol but a reasonable physical assumption.


\textit{Discussion. --} Our protocol shows that correlated systems that would have violated a Bell inequality if were used for a standard Bell test, can be used to signal if assumed to be a computable and a time (or space) bound for their computational complexity is known in advance. The main consequence of this is that we are left with the following consequences: either Bell-violating systems cannot be computable, or if Alice and Bob guess properly a complexity class larger than the one used by the computable systems, they can signal in either way using the previous protocol.

The only assumptions to arrive at this result were the computable nature of the boxes and the requirement of violating a Bell inequality if used for such matter.


This work shows that in device independent scenarios, computability of results imposes a strong limitation on how nature can behave if it only had computable resources to generate outputs for the experiments.
Our result imply that, under the well established assumption that
no observable signaling exists, we need to accept the existance of truly unpredictable physical processes.

It is worth mentioning that our result doesn't go into conflict with the different interpretations of quantum mechanics. All of them predict random outputs, which are not allowed by our model. In the Copenhagen interpretation, the measurement process is postulated as random, whereas, for example, in Bohmian mechanics, it is deterministic but the initial conditions are randomly distributed and fundamentally unknowable.

This work was supported by the ERC CoG QITBOX, an AXA
Chair in Quantum Information Science, the Spanish MINECO (Project
FOQUS FIS2013-46768-P, Severo Ochoa grant
SEV-2015-0522 and FPI FIS2010-14830), grants ANPCyT-PICT-2013-2011, ANPCyT-PICT-2011-0365, UBACyT 20020110100025, the Laboratoire International Associ\'e ``INFINIS'', the Fundacion Cellex, the
Generalitat de Catalunya (SGR875),  and the John Templeton Foundation.

\bibliography{bibliography}

\end{document}